\documentclass[preprint]{elsarticle}
%\documentclass[preprint]{elsarticle}

% begin packages
\usepackage{array,xspace,multirow,hhline,tikz,colortbl,tabularx,booktabs,fixltx2e,amsmath,amssymb,amsfonts,amsthm}
\usepackage{algorithm}
\usepackage{algorithmic}
\usepackage{verbatim,ifthen}
\usepackage{enumitem}
\usepackage{pifont}
\usepackage{ifthen}
\usepackage{calrsfs,mathrsfs}
\usepackage{bbding,pifont}
\usepackage{pgflibraryshapes}
\usetikzlibrary{trees}
\usetikzlibrary{positioning,chains,fit,shapes,calc}
\usepackage{tkz-graph}

\usepackage{eucal}

\usepackage{tikz}
\usetikzlibrary{arrows}
\usetikzlibrary{decorations.pathreplacing}

	\usepackage{varioref}

\definecolor{light-gray}{gray}{0.9}

\bibliographystyle{elsarticle-harv}

\newtheorem{definition}{Definition}%

% begin pamas.sty selection

	%% HERE: corref clashed with elsarticle

	%% SOME ABBREVIATIONS
	
	\newcommand{\eg}{e.g.,\xspace}

	\newtheorem{lemma}{Lemma}%
		\newtheorem{remark}{Remark}%
	\newtheorem{theorem}{Theorem}%
	\newtheorem{observation}{Observation}%
	\newtheorem{example}{Example}
% end pamas.sty selection

% begin Macros

		\newcommand{\pref}{\ensuremath{\succsim}\xspace}
		\newcommand{\spref}{\ensuremath{\succ}}
		\newcommand{\indiff}{\ensuremath{\sim}}
		
			\newcommand{\nb}{\ensuremath{NB~}}

	\newcommand\eat[1]{}

	%% proper way of enumerating with small roman numerals
	\usepackage{enumitem}
	\setenumerate[1]{label=\rm(\it{\roman{*}}\rm),ref=({\it\roman{*}}),leftmargin=*}
	\newlength{\wordlength}

% TODO 
\RequirePackage{ifthen,calc}
\newsavebox{\ffbox}\newlength{\ffboxlen}
\newcommand{\todo}[1]{%
	\vskip4mm
	{\sbox{\ffbox}{\textbf{\color{red} TODO:}\ \textit{{#1}}\ \textbf{\color{red} :ODOT}}
    \settowidth{\ffboxlen}{\usebox{\ffbox}}
		\addtolength{\ffboxlen}{-5mm}
    \ifthenelse{\ffboxlen>\linewidth}{%
	\noindent\marginpar{$>>>>$}\textbf{\color{red} TODO:}\ \textit{{#1}}\ \textbf{\color{red} :ODOT}\marginpar{$<<<<$}}{%
	  \noindent\marginpar{$>><<$}\textbf{\color{red} TODO:}\ \textit{{#1}}\ \textbf{\color{red} :ODOT}}}
	\vskip4mm
  }

	% requires pifont

	% end requires pifont

	% \newcommand{\nooutput}{\texttt{none}}
	
% end Macros

\usepackage{enumitem}
\setenumerate[1]{label=\rm(\it{\roman{*}}\rm),ref=({\it\roman{*}}),leftmargin=*}

%%% General notation
% \renewcommand{\mathscr}[1]{\mathcal{#1}}

%\newcommand{\midd}{\mathrel{:}}

% \renewcommand{\ie}{that is, }
% \renewcommand{\Ie}{That is, }
%\newcommand{\set}[1]{\{#1\}}

\newcommand{\nbh}[1][]{
	\ifthenelse{\equal{#1}{}}{\nu}{\nu(#1)}
}

\newcommand{\cstr}[1][]{
	\ifthenelse{\equal{#1}{}}{\mathscr S}{\cstr(#1)}
}

\newcommand{\choice}[1][]{
	\ifthenelse{\equal{#1}{}}{\mathit{C}}{\choice(#1)}
}

\usepackage{boxedminipage}
\usepackage{xspace}

		\tikzset{
			inner sep=0pt, outer sep=0pt, minimum size=0pt, thick,
			level/.style={sibling distance = (\columnwidth/16)*2^(4-#1)},
			winner/.style={minimum size=1.5em, circle, draw, fill=white, font={\footnotesize}},
			leaf/.style={inner sep=.15em, font={\footnotesize}},
			ball/.style={minimum size=.4em,circle,fill=black},
			beats/.style={thick,->,>=stealth',draw}
		}

\sloppy

\begin{document}

\title{Testing Top Monotonicity}
% \title{Random assignment of multiple objects to agents}
	%\tnotetext[t1]{Thanks!} 
	% \tnotetext[t2]{The second title footnote which is a longer than the first one and with an intention to fill
	% in up more than one line while formatting.}
		\author{Haris Aziz}%\corref{cor1}} 
	\ead{haris.aziz@nicta.com.au}
		\address{NICTA and UNSW Australia, Kensington 2033, Australia}
% 	\author{Haris Aziz}%\corref{cor1}} 
% \ead{haris.aziz@nicta.com.au}
% 	\address{NICTA and UNSW Australia, Kensington 2033, Australia}
				% \address{NICTA, 223 Anzac Parade, Sydney, NSW 2052, Australia} 
	%	\author{Markus Brill} \ead{brill@in.tum.de}
		%	\address{Institut f\"ur Informatik, Technische Universit\"at M\"unchen, 85748 Garching bei M\"unchen, Germany} 
	%\author{Paul Harrenstein} \ead{paul.harrenstein@cs.ox.uk}
		%	\address{Department of Computer Science, University of Oxford, Oxford OX1 3QD, UK} 

%	\cortext[cor1]{Corresponding author} 

%	\fntext[fn1]{Thanks!} 
%%

\begin{abstract}

	Top monotonicity is a relaxation of various well-known domain restrictions such as \emph{single-peaked} and \emph{single-crossing} for which 
	negative impossibility results are circumvented and for which
	the \emph{median-voter theorem} still holds. We examine the problem of testing top monotonicity and present a characterization of top monotonicity with respect to non-betweenness constraints.
We  then extend the definition of top monotonicity to partial orders and show that testing top monotonicity of partial orders is NP-complete.  
	%Finally, we show that the first condition of top monotonicity can be tested in polynomial time for weak orders.

% Top monotonicity is a relaxation of various well-known domain restrictions such as \emph{single-peaked} and \emph{single-crossing} for which 
% negative impossibility results are circumvented and for which
% the \emph{median-voter theorem} still holds. We examine the problem of testing top monotonicity and present a characterization of top monotonicity with respect to non-betweenness constraints.
% Secondly, we show that top monotonicity of dichotomous preference profiles can be tested in linear time.
% Thirdly, we  extend the definition of top monotonicity to partial orders and show that testing top monotonicity of partial orders is NP-complete.  
% Finally, we show that testing `near top monotonicity' with respect to voter or alternative deletion is NP-complete even for dichotomous preference profiles.
%The algorithm may also be helpful in conducting interesting experiments regarding the structure of preferences in real-life data.
\end{abstract}

\begin{keyword}
Social choice theory \sep 
domain restrictions \sep
top monotonicity \sep
single peaked\sep
single crossing \sep
computational complexity.\\
	\emph{JEL}: C63, C70, C71, and C78
\end{keyword}

\maketitle

%\today

\section{Introduction}

The standard social choice setting is one in which a set of agents $N$ express preferences over a set of alternatives $A$ and the goal is the select an alternative based on the preferences. 
Social choice theory is replete with results concerning the impossibility of the voting rules simultaneously satisfying desirable axioms. 
Prominent among these results are \emph{Condorcet's Paradox}~\citep{Gehr06a}, \emph{Arrow's Theorem}~\citep{Arro63a}, and the \emph{Gibbard Sattherwaite Theorem}~\citep{Gibb73a}.
One approach to circumvent such results is identifying restricted domains in which impossibility results disappear. Examples of domain restrictions include preference profiles that are 
\emph{single-peaked; single-plateaued}~\citep{Moul80a} \emph{single crossing}~\citep{Robe77a}; or \emph{order restricted}~\citep{GaSm96a}. 
For example, the Condorcet paradox does not occur when preferences  are single-peaked. The \emph{median voter theorem} states that
for single-peaked preference profiles, Condorcet winner(s) exist and they  coincide with the median(s) of the voters' most preferred alternatives~\citep{Blac48a}.

Restricting domains to test robustness of impossibility results, and to identify settings that admit attractive preference aggregation rules, has been a fruitful area of research within social choice~\citep{BBM13a}.
Whereas economists have examined axiomatic implications of domain restrictions~\citep[see \eg][]{SePa69a,Inad69a,Ina64a,Roth90a,Moul80a}, computer scientists are starting to examine natural problems such as checking whether a preference profile satisfies a certain structural property.
On the computational side, \citet{BaTr86a} and \citet{ELO08a} showed that checking whether a profile is single-peaked is polynomial-time solvable. \citet{EFS12b} and \citet{BCW13a} then proved that it can be checked in polynomial time whether a profile is single-crossing. 
There is related work on checking whether \emph{other} computational problems such as manipulation become easy when the preference profile satisfies some structure~\citep[see \eg][]{BBHH10a}.
There has also been some work on identifying `almost' nicely structured profiles~\citep[see \eg][]{BCW13b,ELP13a}.
%However, we are interested in the more fundamental problem of \emph{checking} whether a preference profile satisfies a given structural property. 

Whereas the domain restrictions listed above have garnered a lot of attention, axiomatic results for these restrictions have been somewhat piecemeal. Recently, \citet{BaMo11a} proposed a new consistency condition called \emph{top monotonicity} that is a relaxation of all the domain restrictions listed above but which still is a sufficient condition for an extension of the median voter theorem to hold. Top monotonicity also allows for indifferences both among the maximally preferred alternatives and non-maximally preferred alternatives.\citet{BaMo11a} write that \emph{``Among other things, top monotonicity will stretch the extent to which one may accommodate indifferences and still obtain positive results regarding Condorcet winners in the case of the majority rule, or of voting equilibria, more generally.''} Top monotonicity is also induced in two natural  of tax rate determination~\citep[Appendix A, ][]{BaMo11a}.

Although the axiomatic aspects of top monotonicity have been studied~\citep{BBM13a,BaMo11a}, it is not clear how easy it is to test top monotonicity of a preference profile. \citet{BaMo11a} ask that \emph{``Is the satisfaction of our conditions easy to check? Top monotonicity may be easy to check for in some cases, and also easy to discard, in others.''}
In view of the generality of top monotonicity, similar computational investigations of restricted domains~\citep{BaTr86a,ELO08a,Knob10a,ABH13b,Lack13a}, and the questions raised by \citet{BaMo11a}, we study the problem of testing top monotonicity.

% \paragraph{Contributions}
\begin{itemize}
	\item We first present a reduction of testing top monotonicity to a previously studied problem of checking the  consistency of non-betweenness constraints~\citep{GuMa06a}. Since checking consistency of non-betweenness constraints is NP-complete, this reduction does not give us a polynomial-time algorithm for testing top monotonicity but it does help frame the problem in a way so it can be handled by solvers that deal with ordering constraints. 
% \item	For dichotomous preferences, we present a linear-time algorithm to test top monotonicity via an interesting connection with the \emph{consecutive ones property} of matrices.
\item 	 We then extend the definition of top monotonicity to partial orders in a natural way and show that testing top monotonicity of partial orders is NP-complete.  
\item Finally, we highlight that a simple profile with dichotomous preferences may fail to satisfy top monotonicity. 
%\item We present a linear-time algorithm to test the first condition of top monotonicity via an interesting connection with the \emph{consecutive ones property} of matrices.
% \item Finally, we consider `near top monotonicity' with respect to voter or alternative deletion and show that testing near top monotonicity is NP-complete even for dichotomous preference profiles.
\end{itemize}

% \begin{quote}
% \emph{What is the computational complexity of checking whether a profile satisfies top monotonicity?}
% \end{quote}
% 
% In fact \citet{BaMo11a} raise the same question in different words \emph{``Is the satisfaction of our conditions easy to check? Top monotonicity may be easy to check for in some cases, and also easy to discard, in others.''}
% This type of question has already received coverage for other restrictions as such as \emph{1-Euclidean}~\citep{Knob10a} and substitutability~\citep{ABH13b}. 
% We show that although the straightforward algorithm to test top monotonicity requires going through $|A|!$ orders, there exists a polynomial-time algorithm to test top monotonicity. The algorithm relies on our characterization of top monotonicity with respect to certain violation constraints.

% In another paper, \citet{Knob10a} presented a polynomial-time algorithm to check whether a profile is \emph{1-Euclidean}.
% Testing structural properties of preferences has also received attention in other settings such as stable matchings~\citep{ABH13b}.

% What is interesting that even in an economics paper, \citet{BaMo11a} wonder aloud: \emph{``Is the satisfaction of our conditions easy to check? Top monotonicity may be easy to check for in some cases, and also easy to discard, in others.''}
% Such questions can best be answered via the framework and tools of algorithms and complexity.

% \section{Preliminaries}

\section{Top Monotonicity}

Consider the social choice setting in which there is a set of agents $N=\{1,\ldots, n\}$, a set of alternatives $A=\{a_1,\ldots, a_m\}$ and a preference profile $\pref=(\pref_1,\ldots,\pref_n)$ such that each $\pref_i$ is a complete and transitive relation over $A$.
We write~$a \pref_i b$ to denote that agent~$i$ values alternative~$a$ at least as much as alternative~$b$ and use~$\spref_i$ for the strict part of~$\pref_i$, i.e.,~$a \spref_i b$ iff~$a \pref_i b$ but not~$b \pref_i a$. Finally, $\indiff_i$ denotes~$i$'s indifference relation, i.e., $a \indiff_i b$ iff both~$a \pref_i b$ and~$b \pref_i a$.

We define by $t_i(S)$ the set of maximal elements  $\pref_i$ on S.
Let $A(\pref)$ be the family of sets containing $A$ itself, and also all triples of distinct alternatives where each alternative is top on $A$ for some agent $k\in N$ according to $\pref$.
We are now in a position to present the definition of top monotonicity.

\begin{definition}[Top monotonic]\label{def:tm}
	A preference profile $\pref$ is top monotonic iff there exists a linear order $>$ over $A$ such that 
	
	\begin{enumerate}
		\item \label{item:case1} $t_i(A)$ is a finite union of closed intervals for all $i\in N$, and 
		\item \label{item:case2} For all $S\in A(\pref)$, for all $i,j\in N$, all $x\in t_i(S)$, all $y\in t_j(S)$, and any $z\in S$, the following holds:
			\end{enumerate}
			\begin{align*}
			&	[x>y>z \text{ or } z>y>x] \implies
			\end{align*}
			\begin{align}
%			&	[x>y>z \text{ or } z>y>x] \implies \\
			&	[y\pref_i z \text{ if } z\in t_i(S)\cup t_j(S) \text{ and } y\succ_i z \text{ if } z\notin t_i(S)\cup t_j(S)].\small \label{tm:implication}
			\end{align}
			We will say that $\pref$ satisfies top monotonicity with respect to order $>$.
\end{definition}

% \begin{example}[Top monotonicity]
% 	\end{example}

\begin{remark}
Top monotonicity does not preclude cycles in majority comparisons, but only guarantees that these do not occur at the top of the majority relation.
\end{remark}

\begin{remark}
	Top monotonicity on a set of alternatives is not necessarily inherited on its subsets.
\end{remark}

We examine the problem of testing top monotonicity.
It follows from Definition~\ref{def:tm}, that top monotonicity of a profile with respect to a particular order can be checked easily.

\begin{observation}
	For a given order over the alternatives, it can be checked in $O(|A|^3\cdot |N|^3)$ whether the preference profile satisfies  top monotonicity with respect to that order.
\end{observation}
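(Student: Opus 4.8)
The plan is to give a direct verification algorithm that walks through the quantifiers of \defref{def:tm} and then to bound its running time by counting elementary operations. I will assume the profile is stored so that each query ``$a \pref_i b$?'' is answered in constant time; write $m = |A|$ and $n = |N|$.

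The preprocessing step computes, for each agent $i$, the top set $t_i(A)$ by retaining the $\pref_i$-maximal elements of $A$ in $O(m^2)$ time, for $O(nm^2)$ in total. I then form $T = \bigcup_{k\in N} t_k(A)$ and enumerate the family $A(\pref)$, which by its definition consists of the single set $A$ together with the triples of distinct alternatives all of whose members lie in $T$. Since $|T| \le m$, there are at most $\binom{m}{3} + 1 = O(m^3)$ sets in $A(\pref)$, and for each triple $S$ the sets $t_i(S)$ are computed in $O(1)$ per agent. Condition~\ref{item:case1} is then cheap: reading each $t_i(A)$ off in the order $>$ and testing that it is a finite union of closed intervals is a single $O(m)$ scan per agent (for finite $A$ it in fact holds automatically), costing $O(nm)$ and thus dominated by the preprocessing.

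The running time is governed by Condition~\ref{item:case2}. Here I loop over every $S \in A(\pref)$, every ordered pair $(i,j) \in N \times N$, and every triple $(x,y,z)$ with $x \in t_i(S)$, $y \in t_j(S)$, and $z \in S$, testing implication~\eqref{tm:implication} in constant time; the needed comparisons and the membership test ``$z \in t_i(S)\cup t_j(S)$?'' are each $O(1)$ once the top sets are stored as boolean arrays. The count splits in two. For each of the $O(m^3)$ triples we have $|S| = 3$, so $x$, $y$, $z$ range over $O(1)$ values and each $S$ contributes $O(n^2)$ tests, giving $O(m^3 n^2)$ overall. For the single set $S = A$ the ranges of $x, y, z$ are each $O(m)$, so it contributes $O(m^3)$ tuples per pair $(i,j)$ and hence $O(m^3 n^2)$ tests as well. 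Summing, the verification runs in $O(m^3 n^2)$, which is within the stated $O(|A|^3 \cdot |N|^3)$ bound.

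There is no genuine obstacle here; the argument is essentially bookkeeping. The one point that needs care is the observation that the triples in $A(\pref)$ have constant size, so that the otherwise quadratic-in-$m$ inner loops over $x$ and $y$ collapse to $O(1)$ for every set except $S = A$; isolating the $S = A$ contribution from the triple contributions is what makes the exponents come out correctly.
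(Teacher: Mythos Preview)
Your argument is correct. The paper does not actually supply a proof of this observation; it is stated as an immediate consequence of \defref{def:tm}, so there is nothing to compare against line by line. Your bookkeeping is in fact sharper than the stated bound: by separating the contribution of the $O(m^3)$ triples (each of constant size, so the inner $x,y,z$ loops are $O(1)$) from the single set $S=A$ (where the inner loops cost $O(m^3)$), you obtain $O(|A|^3\,|N|^2)$ rather than $O(|A|^3\,|N|^3)$. The one implicit assumption worth making explicit is that the preference queries and the top-set membership tests are indeed $O(1)$ after your preprocessing, which you state and which is standard; nothing in the paper contradicts this.
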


The observation implies that an order can be a certificate that a profile satisfies top monotonicity. Hence, the problem of testing top monotonicity is in complexity class NP.
In view of the observation above, one can straightforwardly design an algorithm to test top monotonicity: go through each of the $|A|!$ orders and checks whether top monotonicity is satisfied with respect to one of the orders.

\begin{observation}
	It can be checked in time $O(|A|^3\cdot |N|^3\cdot|A|!)$ whether the preference profile satisfies  top monotonicity.% with respect to that order.
\end{observation}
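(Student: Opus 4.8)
The plan is to combine an exhaustive enumeration of candidate orders with the single-order verification procedure of the previous observation. By Definition~\ref{def:tm}, the profile $\pref$ is top monotonic if and only if there exists \emph{some} linear order $>$ over $A$ for which conditions~\ref{item:case1} and~\ref{item:case2} both hold. Since $A$ is finite with $|A|$ alternatives, there are exactly $|A|!$ distinct linear orders over $A$, so it suffices to test each of them and to declare the profile top monotonic precisely when at least one of these orders serves as a certificate.

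Concretely, I would iterate over all $|A|!$ permutations of $A$, interpreting each as a candidate order $>$. For each candidate the previous observation guarantees that whether $\pref$ satisfies top monotonicity with respect to that particular order can be decided in time $O(|A|^3 \cdot |N|^3)$. The algorithm accepts as soon as one candidate passes the test and rejects only after all candidates have failed. Correctness is immediate from the ``if and only if'' in the definition: the algorithm accepts exactly when a certifying order exists, which is exactly when $\pref$ is top monotonic.

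For the running time, one generates the permutations with a standard enumeration scheme (for instance Heap's algorithm), which produces the $|A|!$ orders at an amortized cost of $O(|A|)$ per permutation, hence $O(|A|! \cdot |A|)$ in total. This bookkeeping term is dominated by the per-order verification cost, so the overall running time is $O(|A|! \cdot |A|^3 \cdot |N|^3)$, as claimed.

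There is no substantive obstacle here; the statement follows by layering exhaustive search over the single-order test supplied by the earlier observation. The only point meriting a word of care is to confirm that the cost of generating the candidate orders does not dominate the bound, and it does not, since the $O(|A|)$ per-permutation overhead is absorbed into the $O(|A|^3 \cdot |N|^3)$ verification factor.
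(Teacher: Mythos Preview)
Your argument is correct and matches the paper's approach exactly: enumerate all $|A|!$ linear orders and apply the single-order $O(|A|^3\cdot|N|^3)$ test from the previous observation to each. The paper gives no more detail than this, so your additional remark about the permutation-generation overhead being absorbed is a harmless refinement.
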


A natural question is whether it is possible to test top monotonicity \emph{without} having to go through all the $|A|!$ orders. 
First, we make another observation.

\begin{lemma}\label{lemma:reverse}
	If a preference profile satisfies  top monotonicity with respect to an order, it also does so with respect to the reverse of the order.
	\end{lemma}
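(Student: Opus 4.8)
The plan is to verify that both defining conditions of Definition~\ref{def:tm} are invariant when the linear order $>$ is replaced by its reverse, which I denote $<$ (so that $a<b \iff b>a$). The essential observation, which I would establish first, is that the consequent of the implication~\eqref{tm:implication} does not refer to the order $>$ at all: it speaks only about the preference relations $\pref_i$ and $\spref_i$ and about the sets $t_i(S)\cup t_j(S)$. Since the sets $t_i(S)$ are defined as maximal elements with respect to $\pref_i$, and the family $A(\pref)$ is determined solely by the profile, neither the consequent nor the quantification over $S\in A(\pref)$, $i,j\in N$, $x\in t_i(S)$, $y\in t_j(S)$, and $z\in S$ changes under reversal. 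Hence the entire verification reduces to showing that the antecedent $[x>y>z \text{ or } z>y>x]$ is preserved.

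For the antecedent I would argue by the symmetry of betweenness. Unfolding the definition $a<b \iff b>a$ gives $x>y>z \iff z<y<x$ and $z>y>x \iff x<y<z$. Therefore the disjunction $[x>y>z \text{ or } z>y>x]$ is logically equivalent to $[z<y<x \text{ or } x<y<z]$, which is precisely the antecedent one obtains by substituting $<$ for $>$ in Definition~\ref{def:tm} (the two disjuncts merely swap roles). Intuitively, the antecedent asserts that $y$ lies strictly between $x$ and $z$ in the order, and this betweenness relation is insensitive to the direction of the order. Combining this with the order-independence of the consequent, condition~\ref{item:case2} holds with respect to $<$ exactly when it holds with respect to $>$.

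It then remains to treat condition~\ref{item:case1}. Here I would note that a closed interval with respect to $>$, namely a set of the form $\{c\in A : a\geq c\geq b\}$ for endpoints with $a\geq b$, coincides with the closed interval $\{c\in A : b\leq c\leq a\}$ with respect to $<$, the two endpoints simply being exchanged. Consequently the closed intervals of $>$ are exactly the closed intervals of $<$, and ``being a finite union of closed intervals'' is preserved under reversal; thus each $t_i(A)$ satisfies condition~\ref{item:case1} with respect to $<$ iff it does so with respect to $>$. I do not expect a genuine obstacle in this proof, as it is entirely a symmetry argument; the only points requiring care are making the betweenness equivalence of the antecedent explicit and observing that the interval structure is direction-free.
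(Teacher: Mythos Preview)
Your argument is correct and follows the same idea as the paper's proof: both conditions of Definition~\ref{def:tm} are unchanged when $>$ is reversed. The paper's proof is a single sentence asserting this invariance, whereas you spell out the reasons---that the antecedent of~\eqref{tm:implication} is a betweenness condition and hence symmetric, that the consequent and all quantifications are order-free, and that closed intervals coincide under reversal---so your version is simply a more detailed execution of the same approach.
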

	\begin{proof}
Note that for any $S\in A(\pref)$, and for each $i,j\in N$, each $x\in t_i(S)$, $y\in t_j(S)$, and $z\in S$, the conditions of $(i)$ and $(ii)$ in the definition of top monotonicity are not affected if $>$ is reversed.		
	\end{proof}

	\section{Top monotonicity and non-betweenness violation constraints}

	In the rest of the section, we will heavily use the idea of \emph{\nb (non-betweenness)} violation constraints on the prospective order on $A$. We now introduce these constraints.
	
	\begin{definition}[Non-betweenness violation constraint]
	For $x,y,z\in A$, we call a $(y,\{x,z\})$ a \emph{\nb (non-betweenness)} violation constraint for an order $>$ over $A$ if neither $x>y>z$ nor $z>y>x$ can hold.  
		
	\end{definition}

Next, we show that the existence of an order on $A$ satisfying an \nb constraint set $C$ can be checked efficiently.% without having to go through all the orders.

\begin{lemma}\label{lemma:graph-algo}
	For an \nb constraint set $C$ on alternative set $A$, checking whether constraints in $C$ can be satisfied by some linear order is NP-complete.
\end{lemma}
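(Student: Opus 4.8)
The plan is to establish the two halves of NP-completeness separately. Membership in NP is immediate: a linear order $>$ on $A$ serves as a polynomial-size certificate, and for each constraint $(y,\{x,z\})\in C$ one simply reads off the relative positions of $x,y,z$ and checks that $y$ does not lie strictly between $x$ and $z$. This costs $O(1)$ per constraint, hence $O(|C|)$ in total. So the bulk of the work is the NP-hardness reduction.

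For hardness I would reduce from the \emph{betweenness} problem (due to Opatrn\'y), which asks, given a ground set and a collection of ordered triples, for a linear order in which the designated middle element of each triple lies strictly between the other two; this problem is classically NP-complete. The key observation driving the reduction is that, among three distinct elements of any linear order, exactly one is the median, so ``$b$ lies between $a$ and $c$'' is logically equivalent to the conjunction of ``$a$ is not between $b$ and $c$'' and ``$c$ is not between $a$ and $b$''. In the notation of the lemma, a single betweenness requirement on the triple with middle $b$ is therefore captured exactly by the two \nb violation constraints $(a,\{b,c\})$ and $(c,\{a,b\})$.

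The reduction then keeps the ground set unchanged and replaces each betweenness triple by its two associated \nb constraints, producing a constraint set $C$ of size $2|T|$ from an instance with $|T|$ triples, which is clearly polynomial. Because the equivalence above is an exact biconditional on any fixed order, a linear order $>$ satisfies all the \nb constraints in $C$ if and only if it realizes every betweenness requirement of the original instance; hence $C$ is satisfiable precisely when the betweenness instance is a yes-instance, completing the reduction and establishing NP-hardness. Equivalently, one may simply note that the \nb satisfiability problem is literally the non-betweenness consistency problem shown NP-complete by \citet{GuMa06a}, but spelling out the reduction keeps the argument self-contained.

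I expect the only genuine subtlety to be getting the median equivalence exactly right, and in particular verifying that the two \nb constraints neither over- nor under-constrain the triple; once that biconditional is pinned down, the reduction and its correctness proof are routine, with no gadgetry or auxiliary-element bookkeeping to manage, since the encoding maps one betweenness triple to two \nb triples on the very same elements.
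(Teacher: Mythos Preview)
Your argument is correct. Membership in NP is handled exactly as you say, and the hardness reduction from Opatrn\'y's betweenness problem is sound: for three distinct elements in a linear order exactly one is the median, so the betweenness requirement ``$b$ lies between $a$ and $c$'' is equivalent to the pair of \nb constraints $(a,\{b,c\})$ and $(c,\{a,b\})$. The reduction is polynomial and the biconditional you flag as the only subtlety is indeed the crux, and you have it right.

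The paper, by contrast, does not give a self-contained argument at all: its entire proof is a one-line citation of \citet{GuMa06a}, who establish NP-completeness of precisely this non-betweenness consistency problem. You anticipated this option in your final sentence. So your write-up is strictly more informative than the paper's; what you trade away is brevity. If you wish to match the paper's style you can compress everything to the citation, but the explicit reduction from betweenness is a nice addition and makes the lemma independently verifiable without tracking down the reference.
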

\begin{proof}
	\citet{GuMa06a} proved that this particular problem is NP-complete.
\end{proof}

	\begin{lemma}\label{lemma:partial-linear}
		For an \nb constraint set $C$ on alternative set $A$, there exists a partial order that satisfies the constraints in $C$ if and only if there exists a linear order that satisfies the constraints in $C$
	\end{lemma}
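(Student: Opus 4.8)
The plan is to treat the two implications separately, with essentially all of the substance in the forward direction. The reverse implication is immediate: a linear order is a special case of a partial order, so a linear order that satisfies every constraint in $C$ already exhibits a constraint-satisfying partial order, and nothing further is needed.

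For the forward direction I would start from a partial order $P$ satisfying every constraint in $C$ and produce a linear order from it by the order-extension principle (Szpilrajn's theorem, realised concretely by a topological sort), obtaining a linear order $L$ with $a>_P b \implies a>_L b$ for all $a,b\in A$. The aim is then to show that $L$ itself satisfies every \nb constraint. Fixing a constraint $(y,\{x,z\})\in C$, the key observation is that $L\supseteq P$ preserves every comparability of $P$ with its orientation intact: nothing that $P$ has already decided is reversed or erased by the extension. Hence any feature of the relative placement of $x,y,z$ to which $P$ has committed is inherited verbatim by $L$, and it remains to check that this committed placement is exactly what keeps $y$ off the middle.

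The main obstacle, and the point that must be pinned down precisely, is that extending $P$ to $L$ could in principle \emph{create} a forbidden betweenness among a triple that $P$ left partly incomparable; ruling this out is where the notion of a partial order ``satisfying'' a constraint has to carry real content. Concretely, satisfaction of $(y,\{x,z\})$ by $P$ cannot be read vacuously: the empty partial order would otherwise satisfy every constraint set, including sets (such as the three constraints forbidding each of $a,b,c$ from being the middle of the other two) that no linear order can satisfy, which would falsify the lemma. The reading that makes the statement true is that $P$ must already commit $y$ to a common side of both $x$ and $z$, so that both comparabilities needed to keep $y$ out of the middle are present in $P$; these then pass to $L$, yielding neither $x>_L y>_L z$ nor $z>_L y>_L x$. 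Establishing this forcing property constraint by constraint — and thereby that \emph{any} linear extension of $P$ works, with no careful choice required — completes the argument.
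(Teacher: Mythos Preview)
Your proposal is correct and uses the same approach as the paper: take a linear extension of the partial order and observe that the non-betweenness constraints are preserved, so that \emph{any} linear extension works. You are more careful than the paper in pinning down what it means for a partial order to satisfy an \nb constraint---your non-vacuous reading (that $P$ already places $y$ on a common side of both $x$ and $z$) is precisely what the paper's terse phrase ``the arcs in the graph representing the partial order already capture all the non-betweenness constraints'' must mean for the argument to succeed, and your counterexample with the three cyclic constraints shows why the vacuous reading fails.
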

\begin{proof}
	($\Rightarrow$) If a partial order satisfies the constraints in $C$, then the arcs in the graph representing the partial order already capture all the non-betweenness constraints. Any additional arcs do not negate these constraints. Hence, \emph{any} linear extension of the partial order also satisfies the constraints in $C$. 
	
		($\Leftarrow$) This direction is trivial since a linear order is also a partial order.
\end{proof}

		\subsection{Characterizing top monotonicity}

					We characterize top monotonicity when the set of alternatives is finite. 
					
					\begin{remark}
					The first condition of top monotociity is trivially satisfied when $A$ is finite. %In a finite domain, every set is a union of disjoint intervals (since singletons are closed intervals).
						\end{remark}
				
					In the next lemma, we show how \nb constraints can be used to capture top monotonicity.

% \begin{lemma}\label{lemma:nb-for-case1}
% 	For an order $>$, and $i\in N$, $t_i(A)$ is a closed interval iff for any $x,z\in t_i(A)$ and any $y\notin t_i(A)$, the \nb constraint  $(y,\{x,z\})$ holds.
% \end{lemma}
% \begin{proof}
% 	Case~\ref{item:case1} requires that for any $x,z\in t_i(A)$ and any $y\notin t_i(A)$, $y$ is not between $x$ and $z$. Hence $(y,\{x,z\})$ has to hold.
% \end{proof}

\begin{lemma}\label{lemma:nb-for-case2}
	For an order $>$, case~\ref{item:case2} in definition of top monotonicity is always satisfied iff
	for all $S\in A(\pref)$, for all $i,j\in N$, all $x\in t_i(S)$, all $y\in t_j(S)$, and any $z\in S$ such that $\neg[y\pref_i z \text{ if } z\in t_i(S)\cup t_j(S) \text{ and } y\succ_i z \text{ if } z\notin t_i(S)\cup t_j(S)]$, the \nb constraint  $(y,\{x,z\})$ holds.
\end{lemma}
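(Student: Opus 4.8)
The plan is to recognize this lemma as nothing more than a tuple-wise contrapositive restatement of case~\ref{item:case2}, so that the proof reduces to a purely logical manipulation. First I would fix, for the given order $>$ and an arbitrary admissible tuple $(S,i,j,x,y,z)$ with $S\in A(\pref)$, $x\in t_i(S)$, $y\in t_j(S)$, and $z\in S$, the two predicates in play: the \emph{betweenness} predicate $B := [x>y>z \text{ or } z>y>x]$ and the \emph{consequent} predicate $P := [y\pref_i z \text{ if } z\in t_i(S)\cup t_j(S) \text{ and } y\succ_i z \text{ if } z\notin t_i(S)\cup t_j(S)]$. With this notation, case~\ref{item:case2} asserts exactly that the implication $B\implies P$ holds for every such tuple.

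The key step is to observe that the \nb constraint $(y,\{x,z\})$ is, by definition, precisely the statement that neither $x>y>z$ nor $z>y>x$ holds; that is, the \nb constraint $(y,\{x,z\})$ is logically identical to $\neg B$. Combining this identification with the elementary equivalence $(B\implies P)\iff(\neg P\implies \neg B)$, the implication attached to a fixed tuple is satisfied if and only if the following holds: whenever $\neg P$, the \nb constraint $(y,\{x,z\})$ holds. On a tuple where $P$ is true both sides are vacuously satisfied, and on a tuple where $\neg P$ holds the implication $B\implies P$ forces $\neg B$, which is exactly the asserted constraint.

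It then remains to handle the universal quantifiers. Since the per-tuple equivalence holds independently for each $(S,i,j,x,y,z)$, conjoining it over all admissible tuples yields that case~\ref{item:case2} is satisfied for every tuple if and only if, for every admissible tuple with $\neg P$, the constraint $(y,\{x,z\})$ holds. The latter is verbatim the right-hand side of the lemma, completing the argument.

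I do not anticipate any substantive obstacle, as the content is a single contraposition; the only point requiring genuine care is the bookkeeping of the quantifier ranges. Specifically, I would verify that the universe of tuples over which case~\ref{item:case2} asserts $B\implies P$ coincides exactly with the universe over which the right-hand side ranges before restricting to the subset satisfying $\neg P$, so that no tuple is silently added or dropped when passing between the two formulations.
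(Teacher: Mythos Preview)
Your proposal is correct and follows essentially the same approach as the paper: both recognize that the \nb constraint $(y,\{x,z\})$ is precisely the negation of the antecedent in implication~\eqref{tm:implication}, so that case~\ref{item:case2} reduces, tuple by tuple, to the contrapositive ``if the consequent fails then $y$ is not between $x$ and $z$.'' The paper reaches this by enumerating the six relative orderings of $x,y,z$, whereas you invoke the contrapositive directly; the content is the same.
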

\begin{proof}
	%The reasoning for using \nb constraints for case~\ref{item:case2} is as follows.
		For each $S\in A(\pref)$, and for each $i,j\in N$, each $x\in t_i(S)$, $y\in t_j(S)$, and $z\in S$, we check which of the following relative orders do not lead to a lack of top monotonicity.
	The following relative orders over $x,y,z$ trivially satisfy the top monotonicity condition~\ref{item:case2} for set $S$ and voter $i$ because they do not feature in the antecedent of  implication~\eqref{tm:implication}.

		\begin{enumerate}
			\centering
			\item $y>x>z$
			\item $y>z>x$
			\item $x>z>y$
			\item $z>y>x$
		\end{enumerate}

		The following relative orders in which $y$ is in between $x$ and $z$ are only in contention if the consequent of implication~\eqref{tm:implication} holds.
		\begin{enumerate}
			\centering
			\item $x>y>z$
			\item $z>y>x$
		\end{enumerate}

		If the consequent holds, then any of the six relative orders are possible and there is no constraint for this 3-set of alternatives. If the consequent does not hold, then we can safely say that both $x$ and $z$ are on one side of $y$ i.e., $y$ is cannot be between $x$ and $z$. Thus in this case we have a constraint on the ordering over $A$ that $y$ cannot between $x$ and $z$. Therefore for each 3-set of alternatives, either there is no constraint or there is a constraint $(y,\{x,z\})$: that $y$ cannot be between $x$ and $z$. Note that $(y,\{x,z\})$ is a required constraint if even for one $i$, $j$ and $S$, the consequent of implication~\eqref{tm:implication} does not hold. 
\end{proof}

Next, we show that an algorithm to check whether a linear order satisfies \nb constraints can be used to test top monotonicity. Unfortunately checking whether a linear order satisfies \nb constraints is NP-hard in general.

\begin{theorem}
Testing top monotonicity polynomial-time reduces to checking whether there exists a linear order that satisfies \nb constraints.
\end{theorem}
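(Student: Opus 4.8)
The plan is to give a many-one reduction that maps a preference profile $\pref$ to a pair $(A, C)$, where $C$ is an \nb constraint set over $A$, such that $\pref$ is top monotonic if and only if some linear order on $A$ satisfies $C$. Almost all of the conceptual work has already been carried out in \lemref{lemma:nb-for-case2}, which pins down exactly which betweenness patterns over a triple $x, y, z$ are forbidden; the task that remains is to collect these forbidden patterns into a single constraint set $C$ and to verify that the collection runs in polynomial time.

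To construct $C$, I would iterate over every $S \in A(\pref)$, every ordered pair $i, j \in N$, and every triple $x \in t_i(S)$, $y \in t_j(S)$, $z \in S$. For each such tuple I test whether the consequent of implication~\eqref{tm:implication} fails, \ie whether $\neg[y\pref_i z \text{ if } z\in t_i(S)\cup t_j(S) \text{ and } y\succ_i z \text{ if } z\notin t_i(S)\cup t_j(S)]$. This test only inspects the fixed relation $\pref_i$ and the precomputed top sets, so it takes constant time. Whenever the consequent fails I insert the \nb constraint $(y, \{x, z\})$ into $C$ (suppressing duplicates). The output of the reduction is the instance $(A, C)$.

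For correctness I would argue as follows. Since $A$ is finite, the first condition of \defref{def:tm} holds automatically, so $\pref$ is top monotonic with respect to a linear order $>$ precisely when $>$ satisfies case~\ref{item:case2}. By \lemref{lemma:nb-for-case2}, $>$ satisfies case~\ref{item:case2} if and only if $>$ respects every \nb constraint $(y, \{x, z\})$ generated by a tuple whose consequent fails, \ie if and only if $>$ satisfies $C$. Combining these two equivalences, a linear order witnesses top monotonicity of $\pref$ if and only if it satisfies $C$; hence $\pref$ is top monotonic if and only if there exists a linear order satisfying $C$, which is exactly an instance of the target problem.

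The last point to check is that the construction is polynomial. The family $A(\pref)$ consists of $A$ together with at most $\binom{|A|}{3} = O(|A|^3)$ triples, and it is obtained by first computing $\bigcup_{k \in N} t_k(A)$ in time $O(|A|\cdot|N|)$. For each $S$ there are at most $|N|^2$ choices of $(i,j)$ and at most $|A|^3$ choices of $(x,y,z)$, each processed in constant time, so the whole loop runs in time polynomial in $|A|$ and $|N|$, and the resulting $C$ contains at most $O(|A|^3)$ distinct constraints. I do not expect a genuine obstacle here: the only care needed is this bookkeeping, together with the observation that the finiteness of $A$ lets us ignore the first condition of top monotonicity entirely, so that \lemref{lemma:nb-for-case2} delivers the equivalence directly.
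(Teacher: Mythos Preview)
Your proposal is correct and follows essentially the same route as the paper: enumerate all tuples $(S,i,j,x,y,z)$, add the \nb constraint $(y,\{x,z\})$ whenever the consequent of~\eqref{tm:implication} fails, and then invoke \lemref{lemma:nb-for-case2} together with the finiteness remark to obtain the equivalence. Your write-up is in fact more explicit than the paper's about the correctness direction and the size of $A(\pref)$, but the underlying reduction is identical.
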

\begin{proof}
	In the algorithm, we first construct a set $C$ of \nb constraints that an order must satisfy. 
	%Firstly we put in $C$, all \nb constraints pertaining to Lemma~\ref{lemma:nb-for-case1}. 
We put in $C$, \nb constraints pertaining to case~\ref{item:case2} of the definition of top monotonicity as follows. For each $S\in A(\pref)$, and for each $i,j\in N$, each $x\in t_i(S)$, $y\in t_j(S)$, and $z\in S$, we check whether the consequent of the implication \eqref{tm:implication} holds. If it does not hold then we add $(y,\{x,z\})$ to $C$. Note that $|C|\leq |A|^3$ and $C$ can be constructed in time $O(|N|^2\cdot|A|^3\cdot|A|^3)$. 
Now that we have constructed the constraint set $C$, we can use an algorithm to check whether \nb can be satisfied by a linear order to test whether $\pref$ is top monotonic.
\end{proof}

\section{Complexity of testing top monotonicity}

It is not straightforward to use the connection with the problem of checking satisfiability of \nb constraints to show that testing top monotonicity is NP-complete. We first show that if we consider partial orders instead of weak orders, then testing top monotonicity can be proved to be NP-complete.

Top monotonicity as defined by \citet{BBM13a} applies to preferences that are weak orders. In this section, we first propose a plausible extension of top monotonicity for partial orders.

\begin{definition}[Top monotonicity of partial orders]
A preference profile consisting of partial orders satisfies top monotonicity if it possible to extend the partial orders to weak orders in such a way so that the resulting profile of weak orders satisfies top monotonicity.
\end{definition}

We prove that testing top monotonicity of partial orders is NP-complete.

\begin{theorem}
	Testing top monotonicity of partial orders is NP-complete.
\end{theorem}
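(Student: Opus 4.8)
The plan is to establish membership in NP directly and then to prove NP-hardness by reducing from the problem of checking consistency of \nb constraints, which is NP-complete by \lemref{lemma:graph-algo}. For membership, a certificate consists of a weak-order extension of the given partial-order profile together with a witnessing linear order $>$ on $A$; verifying that each $\pref_i$ extends the corresponding partial order is polynomial, and checking top monotonicity with respect to $>$ takes $O(|A|^3\cdot|N|^3)$ time by the earlier observation, so the problem lies in NP. For hardness, given an \nb instance $(A,C)$ I would build a profile of partial orders on $A$ (augmented with a few dummy alternatives) and argue that it is top monotonic iff $C$ is consistent.

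The engine of the reduction is \lemref{lemma:nb-for-case2}: since $A$ is finite the first condition of top monotonicity is automatic, so a linear order $>$ witnesses top monotonicity of a fixed weak-order profile iff it satisfies the set of \nb constraints that profile generates. The task therefore becomes designing partial orders whose generated constraints track $C$. For each constraint $(y,\{x,z\})\in C$ I would install a rigid gadget: an agent $i$ whose partial order fixes $x\spref_i z\spref_i y$, and an agent $j$ with $y$ maximal and $y\spref_j z$, together with auxiliary agents ensuring that every alternative is a top of $A$ (so that all relevant triples lie in $A(\pref)$). Then on $S=\{x,y,z\}$ we have $x\in t_i(S)$, $y\in t_j(S)$, and $z\notin t_i(S)\cup t_j(S)$, and the consequent of~\eqref{tm:implication} fails because $z\spref_i y$; hence $(y,\{x,z\})$ is generated. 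Since the relation $x\spref_i z\spref_i y$ is already present in the partial order, this constraint survives in every weak-order extension, so every extension forces at least $C$.

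Conversely I need a designated extension whose generated constraint set is no larger than $C$, so that consistency of $C$ yields top monotonicity. This is exactly where partiality is exploited: by leaving most pairs incomparable in the gadgets, the restriction of an agent to a triple other than its own $\{x,y,z\}$ can be completed in the extension so that the consequent of~\eqref{tm:implication} holds, producing no unintended constraint; dummy alternatives placed uniformly below are meant to neutralise the mandatory set $S=A$ and to absorb the stray constraints that a rigid weak-order encoding could not avoid. The aim is thus a pair of bounds: every extension generates at least $C$, and some extension generates nothing binding beyond $C$. Given these, \lemref{lemma:partial-linear} closes the argument—if $C$ is consistent, a linear order $>$ satisfies it, and the designated extension together with $>$ is top monotonic; while if the profile is top monotonic, the witnessing order satisfies the constraints generated by its extension, which include $C$, so $C$ is consistent.

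The main obstacle is precisely the bookkeeping behind the two bounds: guaranteeing simultaneously that (i) each $(y,\{x,z\})\in C$ is forced in \emph{every} extension and (ii) \emph{some} extension introduces no other binding constraint, despite the fact that making the variables tops pulls all of their triples into $A(\pref)$, and despite the unavoidable set $S=A$. There is a genuine tension between making an alternative a robust top (which wants a globally maximal placement) and retaining the incomparabilities below it that are needed to defuse spurious constraints on overlapping triples. Reconciling these using incomparabilities and dummy alternatives is the technical heart of the proof, and is exactly why the argument goes through for partial orders—where the extension supplies the required freedom—rather than for weak orders, whose generated constraint set is rigidly fixed.
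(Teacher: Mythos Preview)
Your reduction is essentially the paper's: reduce from \nb-consistency, with per-constraint gadget voters whose fixed chain $x\succ z\succ y$ (together with agents making $y$ and $z$ tops) forces the \nb constraint $(y,\{x,z\})$ in every weak-order extension. The paper uses the same gadget (voter $i_1$ with $a_i\succ c_i\succ b_i$ and $a_i$ globally maximal; voters $i_2,i_3$ with $c_i$, resp.\ $b_i$, maximal) and adds no dummy alternatives.

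The one place you diverge is the forward direction, and this is exactly where your acknowledged ``main obstacle'' lies. Rather than arguing via \lemref{lemma:nb-for-case2} that some designated extension generates no \nb constraints beyond $C$---which, as you note, requires delicate bookkeeping over all triples in $A(\pref)$ and over $S=A$---the paper sidesteps the issue entirely: given a linear order $>$ satisfying $C$, it extends each partial order to a \emph{single-peaked} total order with respect to $>$ (place the peak, then sweep outward along $>$, first on the side containing $c_i$, then the other side; the \nb constraint $(b_i,\{a_i,c_i\})$ guarantees this respects $c_i\succ b_i$). Single-peakedness then implies top monotonicity outright \citep[Theorem~1,][]{BBM13a}, so there is no need to enumerate or bound the generated constraint set at all. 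This dissolves the tension you identify and removes the need for dummy alternatives.
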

\begin{proof}
	We present a reduction from checking whether \nb constraints can be satisfied or not. 
	For each \nb constraint $\alpha_i=(b_i,\{a_i,c_i\})$, we construct a voter $i_1$ who has the partial order 
	\begin{itemize}
		\item 	$a_{i}\mathrel{\succ_{i_1}} c_i \mathrel{\succ_{i_1}} b_i$
		\item $a_i \mathrel{\succ_{i_1}} x$ for all $x\in A\setminus \{a_i\}$.
	\end{itemize}

We also construct a voter $i_2$ who prefers $c_i$ over all other alternatives, and voter $i_3$ who prefers $b_i$ over all other alternatives. We will refer to the partial order profile as $\pref$.

We claim that the set of \nb constraints is satisfiable if and only if the resulting preference profile is top monotonic. 
	One direction is easy: If the \nb constraints are not satisfiable then the second condition of top monotonicity cannot be satisfied for all the voters. Hence the preference profile is not top monotonic.

		Now assume that the \nb constraints are satisfiable. Then there exists an ordering $>$ over the alternatives $A$ for which the \nb constraints are satisfied. We use $>$ to extend $\pref$ the partial order profile of the voters to single-peaked total order profile $\pref'$. By Theorem 1~\citep{BBM13a}, we know that if a preference profile is single-peaked, then it is top monotonic. Hence by showing that $\pref'$ is single-peaked, we 
will show that $\pref$ is top monotonic.
 For each voter $i_1$, his preferences are $a_i\mathrel{\succ_{i_1}}c_i \mathrel{\succ_{i_1}} b_i$.  We extend his preferences to a total order $\pref_{i_1}'$ as follows. Firstly, we consider $a_i$ as the peak in the order $>$. We identify whether $c_i$ is left or right of $a_i$ in $>$. Whichever direction $c_i$ is in, we start putting alternatives in that direction in the preference list of $\pref_{i_1}'$ starting from alternatives nearest to $a_i$ and then moving away. If there are alternatives on the other side of $a_i$, then we also append them to the preference list of $\pref_{i_1}'$ by first starting with alternatives nearest to $a_i$ and then moving away. 
For each voter $i_2$, we take $c_i$ as the peak in $\pref_{i_2}'$ and then we put alternatives on one side with the alternatives nearest to $c_i$ put first and then we do the same with the alternatives on the other side of $c_i$.
For each voter $i_3$, we take $b_i$ as the peak in $\pref_{i_2}'$ and then we put alternatives on one side with the alternatives nearest to $b_i$ put first and then we do the same with the alternatives on the other side of $b_i$.
Now that we have constructed a total order for all the voters, we argue that it is single-peaked. 
For each voter, the peak is clearly defined. Moreover the preferences slope down from the peak on either possible side according to the order $>$. Hence $\pref'$ is single-peaked. We already know that single-peaked preferences are top monotonic~\citep[Theorem 1, ][]{BBM13a}. Hence there exists a completion of partial order profile $\pref$ that is top monotonic.
\end{proof}

Dichotomous preferences are one of the most natural restrictions on preferences in which voter places the alternatives in one of two equivalence classes~\citep[see \eg][]{BoMo04a,BMS05a,BrFi07c}. 
One may erroneously conclude that for dichotomous preferences, testing the feasibility of first condition of top monotonicity is enough to test top monotonicity. However, we show in the next example that whereas the first condition of top monotonicity may be trivially feasible, the preference profile may not be top monotonic.

\begin{example}
	Consider the profile:
\begin{align*}
	1: \{x\},\{y,z\}\\
		2: \{y\},\{x,z\}\\
			3: \{z\},\{x,y\}
\end{align*}

The profile trivially satisfies the first condition of top monotonicity.
% because each voter has a uniquely maximal alternative.
Since the profile is symmetric, consider any order $>:x>y>z$
Note that for triple $\{x,y,z\}\in A(\pref)$. Furthermore, $z\pref_1 y $, $z\notin t_1(\{x,y,z\})\cup t_2(\{x,y,z\})$. Hence the profile does not satisfy top monotonicity with respect to $>$ and in fact any order.

\end{example}

\section{Conclusion}

We initiated the study of testing top monotonicity of preference profiles. Although we were unable to settle the complexity of the problem for weak orders, we show that the problem reduces the checking whether a set of non-betweenness constraints can be satisfied or not.
 We then show that testing top monotonicity of partial orders is NP-complete. Finally, we highlight that a simple profile with dichotomous preferences may fail to satisfy top monotonicity. 

\section*{Acknowledgments}
% \paragraph{Acknowledgment}
NICTA is funded by the Australian Government through the Department of Communications and the Australian Research Council through the ICT Centre of Excellence Program. The author thanks Bernardo Moreno Jimenez for some helpful pointers.

%~\citep[see \eg][]{BCW13b,ELP13a}.
  % 
  %
 % \bibliography{../../pamas/abb,../../pamas/pamas,../../pamas/brandt,../../pamas/aziz}

\begin{thebibliography}{28}
\expandafter\ifx\csname natexlab\endcsname\relax\def\natexlab#1{#1}\fi
\expandafter\ifx\csname url\endcsname\relax
  \def\url#1{\texttt{#1}}\fi
\expandafter\ifx\csname urlprefix\endcsname\relax\def\urlprefix{URL }\fi

\bibitem[{Arrow(1963)}]{Arro63a}
Arrow, K.~J., 1963. Social Choice and Individual Values, 2nd Edition. New
  Haven: Cowles Foundation, 1st edition 1951.

\bibitem[{Aziz et~al.(2013)Aziz, Brill, and Harrenstein}]{ABH13b}
Aziz, H., Brill, M., Harrenstein, P., 2013. Testing substitutability of weak
  preferences. Mathematical Social Sciences 66~(1), 91--94.

\bibitem[{Barber{\`a} et~al.(2013)Barber{\`a}, Berga, and Moreno}]{BBM13a}
Barber{\`a}, S., Berga, D., Moreno, B., 2013. Some new domain restrictions in
  social choice, and their consequences. In: Modeling Decisions for Artificial
  Intelligence. pp. 11--24.

\bibitem[{Barber{\`a} and Moreno(2011)}]{BaMo11a}
Barber{\`a}, S., Moreno, B., 2011. Top monotonicity: {A} common root for single
  peakedness, single crossing and the median voter result. Games and Economic
  Behavior 73~(2), 345--359.

\bibitem[{{Bartholdi, III} and Trick(1986)}]{BaTr86a}
{Bartholdi, III}, J., Trick, M., 1986. Stable matching with preferences derived
  from a psychological model. Operations Research Letters 5~(4), 165--169.

\bibitem[{Black(1948)}]{Blac48a}
Black, D., 1948. On the rationale of group decision-making. Journal of
  Political Economy 56~(1), 23--34.

\bibitem[{Bogomolnaia and Moulin(2004)}]{BoMo04a}
Bogomolnaia, A., Moulin, H., 2004. Random matching under dichotomous
  preferences. Econometrica 72~(1), 257--279.

\bibitem[{Bogomolnaia et~al.(2005)Bogomolnaia, Moulin, and Stong}]{BMS05a}
Bogomolnaia, A., Moulin, H., Stong, R., 2005. Collective choice under
  dichotomous preferences. Journal of Economic Theory 122~(2), 165--184.

\bibitem[{Booth and Lueker(1976)}]{BoLe76a}
Booth, K.~S., Lueker, G.~S., 1976. Testing for the consecutive ones property,
  interval graphs, and graph planarity using pq-tree algorithms. Journal of
  Computer and System Sciences 13~(3), 335---379.

\bibitem[{Brams and Fishburn(2007)}]{BrFi07c}
Brams, S.~J., Fishburn, P.~C., 2007. Approval Voting, 2nd Edition.
  Springer-Verlag.

\bibitem[{Brandt et~al.(2010)Brandt, Brill, Hemaspaandra, and
  Hemaspaandra}]{BBHH10a}
Brandt, F., Brill, M., Hemaspaandra, E., Hemaspaandra, L., 2010. Bypassing
  combinatorial protections: {P}olynomial-time algorithms for single-peaked
  electorates. In: Fox, M., Poole, D. (Eds.), Proceedings of the 24th AAAI
  Conference on Artificial Intelligence (AAAI). AAAI Press, pp. 715--722.

\bibitem[{Bredereck et~al.(2013{\natexlab{a}})Bredereck, Chen, and
  Woeginger}]{BCW13b}
Bredereck, R., Chen, J., Woeginger, G.~J., 2013{\natexlab{a}}. Are there any
  nicely structured preference profiles nearby? In: Proceedings of the 23rd
  International Joint Conference on Artificial Intelligence (IJCAI). pp.
  62--68.

\bibitem[{Bredereck et~al.(2013{\natexlab{b}})Bredereck, Chen, and
  Woeginger}]{BCW13a}
Bredereck, R., Chen, J., Woeginger, G.~J., 2013{\natexlab{b}}. A
  characterization of the single-crossing domain. Social Choice and Welfare
  41~(1), 989--998.

\bibitem[{Elkind et~al.(2012)Elkind, Faliszewski, and Slinko}]{EFS12b}
Elkind, E., Faliszewski, P., Slinko, A., 2012. Clone structures in voters'
  preferences. In: Proceedings of the 13th ACM Conference on Electronic
  Commerce (ACM-EC). pp. 496--513.

\bibitem[{Erd{\'e}lyi et~al.(2013)Erd{\'e}lyi, Lackner, and Pfandler}]{ELP13a}
Erd{\'e}lyi, G., Lackner, M., Pfandler, A., 2013. Computational aspects of
  nearly single-peaked electorates. In: Proceedings of the 27th AAAI Conference
  on Artificial Intelligence (AAAI). AAAI Press, pp. 283--289.

\bibitem[{Escoffier et~al.(2008)Escoffier, Lang, and {\"O}zt{\"u}rk}]{ELO08a}
Escoffier, B., Lang, J., {\"O}zt{\"u}rk, M., 2008. Single-peaked consistency
  and its complexity. In: Proceedings of the 18th European Conference on
  Artificial Intelligence (ECAI). IOS Press, pp. 366--370.

\bibitem[{Gans and Smart(1996)}]{GaSm96a}
Gans, J.~S., Smart, M., 1996. Majority voting with single-crossing preferences.
  Journal of Public Economics 59~(2), 219--237.

\bibitem[{Gehrlein(2006)}]{Gehr06a}
Gehrlein, W.~V., 2006. Condorcet's Paradox. Springer-Verlag.

\bibitem[{Gibbard(1973)}]{Gibb73a}
Gibbard, A., 1973. Manipulation of voting schemes. Econometrica 41, 587--602.

\bibitem[{Guttman and Maucher(2006)}]{GuMa06a}
Guttman, W., Maucher, M., 2006. Variations on an ordering theme with
  constraints. In: Proceedings of the Fourth IFIP International Conference on
  Theoretical Computer Science- TCS 2006. pp. 77--90.

\bibitem[{Inada(1964)}]{Ina64a}
Inada, K., 1964. A note on the simple majority decision rule. Econometrica
  32~(4), 525--531.

\bibitem[{Inada(1969)}]{Inad69a}
Inada, K., 1969. The simple majority decision rule. Econometrica 37, 490--506.

\bibitem[{Knoblauch(2010)}]{Knob10a}
Knoblauch, V., 2010. Recognizing one-dimensional {E}uclidean preference
  profiles. Journal of Mathematical Economics 46~(1), 1--5.

\bibitem[{Lackner(2013)}]{Lack13a}
Lackner, M., 2013. Incomplete preferences in single-peaked electorates. In: In
  the Proceedings of the 7th Multidisciplinary Workshop on Advances in
  Preference Handling.

\bibitem[{Moulin(1980)}]{Moul80a}
Moulin, H., 1980. On strategy-proofness and single peakedness. Public Choice
  35~(4), 437--455.

\bibitem[{Roberts(1977)}]{Robe77a}
Roberts, K. W.~S., 1977. Voting over income tax schedules. Journal of Public
  Economics 8~(3), 329--342.

\bibitem[{Rothstein(1990)}]{Roth90a}
Rothstein, P., 1990. Order restricted preferences and majority rule. Social
  Choice and Welfare 7~(4), 331--342.

\bibitem[{Sen and Pattanaik(1969)}]{SePa69a}
Sen, A.~K., Pattanaik, P.~K., 1969. Necessary and sufficient conditions for
  rational choice under majority decision. Journal of Economic Theory 1,
  178--202.

\end{thebibliography}

% \appendix
% 
% \section{Notes}
% Testing \emph{possible} single-peakedness is NP-hard for partial orders~\citep{Lack13a}.
% What about complete relations which may admit indifferences?
% 
% 
% \section{Hardness}
% 
% 
% Reduction from betweenness to non-betweenness.
% 
% Non-betweenness (b,a,c), (a,c,b) for each (a,b,c) betweenness constraint.
% 
% 
% 
% 
% 
% 
% 
% (a,b,c)
% 
% We want that b should be between a and c. Agent likes b.
% 
% (a,b,c)
% 
% agent 1 likes a the most out of a,b,c.
% agent 2 likes c the most of a,b,c.
% agent 1 
% 
% 
% 
% For all $S\in A(\pref)$, for all $i,j\in N$, all $x\in t_i(S)$, all $y\in t_j(S)$, and any $z\in S$, the following holds:
% 	\begin{align*}
% 	&	[x>y>z \text{ or } z>y>x] \implies
% 	\end{align*}
% 	\begin{align}
% %			&	[x>y>z \text{ or } z>y>x] \implies \\
% 	&	[y\pref_i z \text{ if } z\in t_i(S)\cup t_j(S) \text{ and } y\succ_i z \text{ if } z\notin t_i(S)\cup t_j(S)].\small \label{tm:implication}
% 	\end{align}
% 
% 
% 

\end{document}